\documentclass[a4paper]{article}
\usepackage{amsfonts}
\usepackage{amsthm}
\usepackage{amsmath}
\usepackage{amssymb}

\usepackage{tikz}
\usepackage{pgfpages}

\usepackage{url}
\usepackage{color}

\usetikzlibrary{patterns}

\newtheorem{theorem}{Theorem}
\newtheorem{lemma}[theorem]{Lemma}

\newtheorem{note}[theorem]{Note}

\pgfdeclarepatternformonly{north west lines2}{%
\pgfqpoint{-1pt}{-1pt}}{\pgfqpoint{4pt}{4pt}}{\pgfqpoint{3pt}{3pt}}%
{
  \pgfsetlinewidth{0.4pt}
  \pgfpathmoveto{\pgfqpoint{0pt}{3pt}}
  \pgfpathlineto{\pgfqpoint{3.1pt}{-0.1pt}}
  \pgfusepath{stroke}
}


\begin{document}

\title{The ``Game about Squares'' is NP-hard}

\author{Jens Ma{\ss}berg\\ \small
 Institut f\"ur Optimierung und Operations
Research,
Universit\"at Ulm,\\
\small  {jens.massberg@uni-ulm.de}}


\newcommand{\Square}[2]{
\draw[fill=blue] (#1,#2) -- ++(0,1) -- ++(1,0) -- ++(0,-1) -- cycle;
}

\newcommand{\SquareC}[4]{
 \draw[fill=#4,color=#4] (#1,#2) -- ++(0,1) -- ++(1,0) -- ++(0,-1) -- cycle;
 \node at (#1 +0.3,#2+0.2) {#3};
}

\newcommand{\Block}[2]{
\draw[fill=black] (#1,#2) -- ++(0,1) -- ++(1,0) -- ++(0,-1) -- cycle;
}

\newcommand{\Down}[2]{
   \draw[fill=black] (#1+0.2+0.5,#2+0.1+0.5) -- ++ (-0.2,-0.2) -- ++(-0.2,0.2) 
-- cycle;
}

\newcommand{\Left}[2]{
   \draw[fill=black] (#1+0.1+0.5,#2+0.2+0.5) -- ++ (-0.2,-0.2) -- ++(0.2,-0.2) 
-- cycle;
}

\newcommand{\SquareDown}[2]{
  \draw[fill=white,color=white] (#1+0.2+0.5,#2+0.1+0.5) -- ++ (-0.2,-0.2) -- ++(-0.2,0.2) 
    -- cycle;
}

\newcommand{\SquareLeft}[2]{
  \draw[fill=white, color=white] (#1+0.1+0.5,#2+0.2+0.5) -- ++ (-0.2,-0.2) -- ++(0.2,-0.2) 
    -- cycle;
}

\newcommand{\Final}[3]{
   \node[draw, circle, fill=blue] at (#1+0.5,#2+0.5) {};
}
\newcommand{\FinalC}[4]{
   \node[draw, circle, fill=#4, color=#4] at (#1+0.5,#2+0.5) {};
   \node at (#1+0.5,#2+0.5) {#3};
}

\definecolor{lightgrayX}{rgb}{0.9, 0.9, 0.9}
\definecolor{lightgray}{rgb}{0.8, 0.8, 0.8}

\definecolor{colorA}{HTML}{ED7F22}
\definecolor{colorB}{HTML}{2D3E50}
\definecolor{colorC}{HTML}{297FB8}
\definecolor{colorD}{HTML}{C1392B}

\maketitle

\textbf{Keywords:} Game about Squares, NP-hardness, computational complexity 

\begin{abstract}
 In the ``Game about Squares'' the task is to push unit squares on an integer
 lattice onto corresponding dots. A square can only be moved into one given
 direction. 
 When a square is pushed onto a lattice point with an arrow the direction of the
 square adopts the direction of the arrow. Moreover, squares can push other
 squares.
 
 In this paper we study the decision problem, whether all squares can be moved
 onto their corresponding dots by a finite number of pushes. We prove that this
 problem is NP-hard.
\end{abstract}

\section{Introduction}

The ``Game about Squares''  \cite{gas} is an addictive game where unit squares 
have to be moved on an integer lattice onto dots of the same color. It has been released by Andrey
 Shevchuk in July 2014.
In the meantime several clones of the game are available for different platforms.


The basic rules of the game are the following:

Let $\mathcal{D}=\{(-1,0),(1,0),(0,-1),(0,1)\}$ (left, right, down and up, respectively) be a set of directions.
The game is played on an infinite integer lattice $\mathbb{Z}^2$.
An instance $(S,p(S),f(S),d(S),A,p(A),d(A))$ of the game consists of
\begin{itemize}
\item A finite set of squares $S$ with different initial positions $p:S\rightarrow \mathbb{Z}^2$ on
 the lattice. In the game the squares are represented by unit squares of different colors.
\item A final position $f:S\rightarrow \mathbb{Z}^2$ for every square, marked by a
dot of the color of the corresponding square. No two squares have the same final position.
\item An initial direction $d:S\rightarrow \mathcal{D}$ for every square.
\item A finite set of arrows $A$ with distinct positions $p:A\rightarrow \mathbb{Z}^2$ and 
 directions $d:A\rightarrow \mathcal{D}$.
\end{itemize}

The game is played in rounds.
In every round the player chooses a square $s\in S$ that is pushed. Let $d$ be the direction $d(s)$ of 
the square. The square moves one position into direction $d$, that is, its new position is 
$p(s)_{\text{new}}:= p(s)+d$.
If there is already another square $s_2$ at the new position, $s_2$ also moves into direction $d$,
independent of its own direction. If $s_2$ lands on the position of a third square $s_3$, $s_3$ also  
moves into direction $d$ and so on. If a square $s$ lands on a position with an arrow 
(that is, there exists an $a\in A$ with $p(s)_{\text{new}}=p(a)$), the square adopts the direction 
of the arrow, that is, $d_{\text{new}}(s):= d(a)$.

The player wins the game if after a finite number of moves each square $s\in S$ is on its final 
position $f(s)$.
A \emph{winning sequence} is a sequence $(s_1,\ldots, s_k)$, $s_i\in S$, such that if in each round 
$i\in\{1,\ldots,k\}$ the player pushes the square $s_i$,  the game is won in round $k$.

The original game \cite{gas} consists of 35 levels with increasing difficulty.
Between Level 22 and Level 23 the author of the game, Andrey Shevchuk, asks ``Do you think 
this game is hard?''.  We interpret this question in a mathematical way (even if this has not been the
 intention of Shevchuk).
We prove by a reduction from {\sc Satisfiability}, that the game is NP-hard.
Nevertheless, it remains an open question, if this game is in NP or if it is even PSPACE-hard.

\section{Reduction from SATISFIABILITY}

We prove that the game is NP-hard by a reduction from {\sc SATISFIABILITY},
which has been proven
 to be NP-hard by Cook \cite{Cook}. More precisely we prove, that it is NP-hard to decide if a given
 instance of the ``Game about Squares'' (in short GaS) can be won, that is, there is a finite number of 
 moves such that all squares reach their final position.

Let $(X,\mathcal{C})$ be a {\sc Satisfiability} instance where $X=\{x_1,\ldots, x_n\}$ is a set of variables and $\mathcal{C}=\{C_1,\ldots, C_m\}$ is a set of clauses over $X$.
We construct an instance for the GaS that can be won if and only if there is a truth assignment 
$\pi:X\rightarrow\{\mathtt{true},\mathtt{false}\}$ satisfying all clauses.

\begin{figure}[ht]
\begin{center}
 \begin{tikzpicture}[scale=0.8]
   \draw[draw=white,fill=lightgrayX] (1,-4) rectangle (2,2);
   \draw[draw=white,fill=lightgrayX] (3,-4) rectangle (4,1);
 
   \draw[dotted] (0.5,-4.5) grid (5.5,2.5);
   
   \Block{2}{0};

   \node at (1.5,2.2) {$x_i$};
   \node at (3.5,2.2) {$\overline{x_i}$};

   \SquareC{4}{1}{$p$}{colorA};
   \SquareLeft{4}{1};
   \FinalC{2}{1}{$p$}{colorA};
   
   \SquareC{3}{1}{\textcolor{white}{$\mathbf{x}$}}{colorB};
   \SquareDown{3}{1};
   \FinalC{1}{-4}{\textcolor{white}{$\mathbf{x}$}}{colorB};
     
   \Left{3}{-4};

  \end{tikzpicture}
\end{center}
 \caption{Variable gadget for a variable $x_i$.
 A black triangle shows the position of an arrow and its direction. White triangles
 show the initial direction of a square.
   If the left column is 
used by the square labeled with $x$ then $x_i=\mathtt{true}$. Otherwise we have $x_i=\mathtt{false}$.}
  \label{fig:variable}
\end{figure}
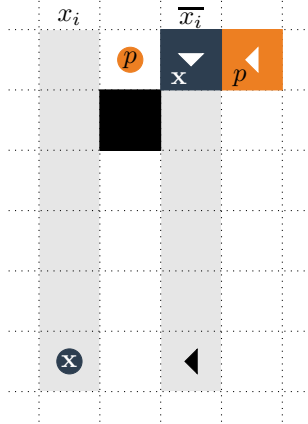

In our GaS instance squares can only be moved to the left and down. To this end, the
initial direction of each square and the direction of each arrow is either left or down.
With this restriction we observe, that a square can never be above or left of its initial position. If a  
square if below or right of its final position, the game cannot be won.
For a given square $s$ we call a position infeasible, if it it left or below of the final
position of $s$ or if it is above or right of the initial position of $s$. Otherwise, we call the position 
feasible for $s$.

Moreover, we use in our instance so called blockers, that are squares that are initially at their
final position.  Moving them from that position, they can never be moved back to their final position.
Thus in order to win the game, blockers are not allowed to be moved.

For every variable we insert a variable gadget as shown in Figure \ref{fig:variable}.
It consists of a variable square (labeled $x$), a decision square (labeled $p$), a blocker and two 
columns. The decision square can push the variable square from the 
right to the left column, where the blocker ensures that these are the only two columns
that can be used.  Depending on which column the variable square moves 
to its final position, the variable is set to $\mathtt{true}$ or to $\mathtt{false}$.
 Accordingly we associate the  literal $x_i$ with the left and $\overline{x_i}$ with right column.

\begin{figure}[ht]
 \begin{center}
 \begin{tikzpicture}[scale=0.8]
   \draw[fill=lightgrayX,draw=white] (3,4) rectangle (10,5);
   \draw[fill=lightgrayX,draw=white] (2,6) rectangle (10,7);

   \draw[dotted] (1.5,2.5) grid (10.5,7.5);

   \SquareC{9}{6}{$C_j$}{colorC};
   \SquareLeft{9}{6};
   \FinalC{2}{4}{\tiny $C_j$}{colorC};

   \FinalC{3}{3}{\tiny $D_j$}{colorD};
   \SquareC{4}{4}{$D_j$}{colorD};
   \SquareDown{4}{4};

   \Down{2}{6};

 \end{tikzpicture}
 \end{center}
 \caption{Clause gadget for a clause $C_j$. The two rows are colored in gray.
 The clause squares has to go to its final position on the 
 left.
 The square $D$ can only reach its final destination if the clause square
 uses the row at the bottom, that is, the clause is satisfied.
 }
 \label{fig:clause}
\end{figure}
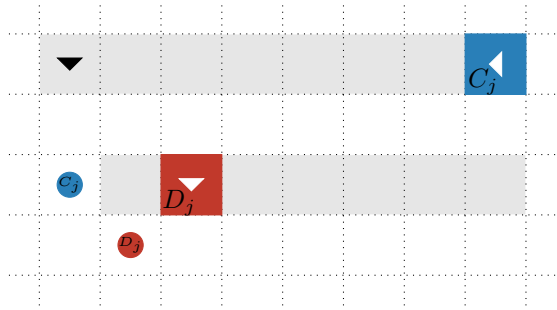

For every clause $C_j$ we insert a clause gadget as shown in Figure \ref{fig:clause}.
It consists of two rows, a clause square $C_j$ and a square $D_j$ indicating if 
the clause is satisfied or not.
Only if the clause square moves on the lower row the indication square $D_j$ can 
reach its final position.

We build a lattice containing these gadgets such that each pair of variable columns 
intersects with each pair of clause rows:
 For $i\in\{1,\ldots,n\}$ we place the variable gadget for $x_i$
in such a way that the variable square is at $(4(i+1),4(m+1))$ and its final position is at
 $(4(i+1)-2,1)$.
 For $j\in\{1,\ldots,m\}$ we place the clause gadget for $C_j$ such that the clause square
 is at  $(4(n+1),4(m-j)+6)$ and its final position is at $(1,4(m-j)+4)$. 
 See Figure \ref{fig:all} for an example.
All gadgets are placed into a lattice of size $4(n+1)\times 4(m+1)$.
Note that each lattice point is feasible for one variable and one clause square.
It remains to specify the crossings of clause rows and variable columns.

\begin{figure}[ht]
\begin{center}
 \begin{tikzpicture}[scale=0.8]
   \draw[fill=lightgrayX,draw=white] (1,-0.5) rectangle (2,5.5);
   \draw[fill=lightgrayX,draw=white] (3,-0.5) rectangle (4,5.5);
   \draw[fill=lightgray,draw=white] (-0.5,1) rectangle (5.5,2);
   \draw[fill=lightgray,draw=white] (-0.5,3) rectangle (5.5,4);
 
   \draw[dotted] (-0.5,-0.5) grid (5.5,5.5);
   
   \node at (1.5,5.2) {$x_i$};
   \node at (3.5,5.2) {$\overline{x_i}$};

   \node at (6.5,3.5) {$C_j$ not sat.};
   \node at (6.5,1.5) {$C_j$ sat.};

   \Left{0}{1};
   \Left{0}{3};
   \Left{2}{1};
   \Left{2}{3};
   \Down{1}{0};
   \Down{1}{2};
   \Down{3}{0};
   \Down{3}{2};
 \end{tikzpicture}
 \end{center}
 \caption{A crossing between a variable $x_i$ and a clause $C_j$ that
 neither contains $x_i$ nor $\overline{x_i}$.}
 \label{crossing}
\end{figure}
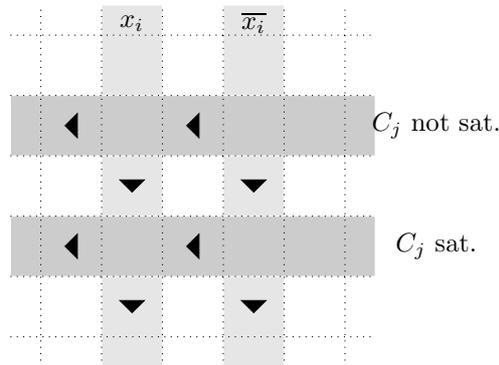

Figure \ref{crossing} shows a crossing between the columns of a variable $x_i$ and a clause 
$C_j$ that neither contains $x_i$ nor $\overline{x_i}$.
 Note that if a variable square pushed a clause tile ore vice versa, the pushed 
square changes its orientation and cannot reach its final position.

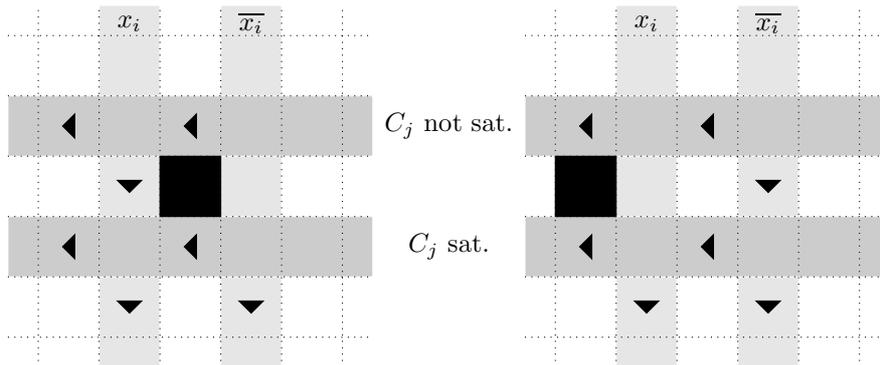
\begin{figure}[ht]
\begin{center}
 \begin{tikzpicture}[scale=0.8]
   \draw[fill=lightgrayX,draw=white] (1,-0.5) rectangle (2,5.5);
   \draw[fill=lightgrayX,draw=white] (3,-0.5) rectangle (4,5.5);
   \draw[fill=lightgray,draw=white] (-0.5,1) rectangle (5.5,2);
   \draw[fill=lightgray,draw=white] (-0.5,3) rectangle (5.5,4);

   \draw[dotted] (-0.5,-0.5) grid (5.5,5.5);

   \node at (1.5,5.2) {$x_i$};
   \node at (3.5,5.2) {$\overline{x_i}$};

   \Block{2}{2};

   \Left{0}{1};
   \Left{0}{3};
   \Left{2}{1};
   \Left{2}{3};
   \Down{1}{0};
   \Down{1}{2};
   \Down{3}{0};
   
   \begin{scope}[xshift=8.5cm, yshift=0cm]
     \draw[fill=lightgrayX,draw=white] (1,-0.5) rectangle (2,5.5);
     \draw[fill=lightgrayX,draw=white] (3,-0.5) rectangle (4,5.5);
     \draw[fill=lightgray,draw=white] (-0.5,1) rectangle (5.5,2);
     \draw[fill=lightgray,draw=white] (-0.5,3) rectangle (5.5,4);

     \node at (1.5,5.2) {$x_i$};
     \node at (3.5,5.2) {$\overline{x_i}$};

     \draw[dotted] (-0.5,-0.5) grid (5.5,5.5);
     \Block{0}{2};

     \Left{0}{1};
     \Left{0}{3};
     \Left{2}{1};
     \Left{2}{3};
     \Down{1}{0};
     \Down{3}{0};
     \Down{3}{2};
   
   \end{scope}

      \node at (6.75,3.5) {$C_j$ not sat.};
   \node at (6.75,1.5) {$C_j$ sat.};

 \end{tikzpicture}
 \end{center}
 \caption{A crossing between a clause $C_j$ that contains $\overline{x_i}$ (left) or 
 $x_i$ (right), respectively.}
 \label{fig:cross2}
\end{figure}

Finally, Figure \ref{fig:cross2} shows the crossing of a variable $x_i$ and a clause $C_j$ that 
contains $x_i$ (on the left) and clause $C_j$ that contains $\overline{x_i}$ (on the right),
respectively. Note that in these cases, the clause square can be pushed to the lower row
by the variable square without changing its direction if and only if the corresponding
literal is satisfied, that is,
the variable square uses the column of the corresponding literal. Once again, the blockers
ensure that the clause squares can leave the crossing only on one of the two clause rows.

We can assume w.l.o.g. that no clause contains both $x_i$ and $\overline{x_i}$ as such clauses are always satisfied.

\begin{figure}[ht]
  \begin{tikzpicture}[scale=0.6]
    
    \foreach \i/\p in {0/4,4/3,8/2,12/1} {
     \draw[fill=lightgray,color=lightgray] (0,5+\i) rectangle (19,6+\i);
     \draw[fill=lightgray,color=lightgray] (1,3+\i) rectangle (19,4+\i);
     \Down{0}{5+\i};
     \SquareC{2}{3+\i}{\tiny $D_\p$}{colorD};
     \SquareDown{2}{3+\i};
     \FinalC{1}{2+\i}{\tiny $D_\p$}{colorD};
    
     \SquareC{19}{5+\i}{\tiny $C_\p$}{colorC};
     \FinalC{0}{3+\i}{\tiny $C_\p$}{colorC};
     \SquareLeft{19}{5+\i};
    }

    \foreach \i/\p in {0/1,4/2,8/3,12/4} {
      \draw[fill=lightgray,color=lightgrayX] (4+\i,0) rectangle (5+\i,20);
      \draw[fill=lightgray,color=lightgrayX] (6+\i,0) rectangle (7+\i,19);

      \FinalC{4+\i}{0}{\tiny \textcolor{white}{$\mathbf{x_\p}$}}{colorB};
      \Left{6+\i}{0};
    
      \SquareC{6+\i}{19}{\textcolor{white}{$\mathbf{x_\p}$}}{colorB};
      \SquareDown{6+\i}{19};
     
      \FinalC{5+\i}{19}{\tiny $p_\p$}{colorA};
      \SquareC{7+\i}{19}{$p_\p$}{colorA};
      \SquareLeft{7+\i}{19};
      \Block{5+\i}{18};
    }


     \foreach \x in {0,4,8,12,16} {
       \draw (3,2+\x) -- (19,2+\x);
       \draw (3+\x,2) -- (3+\x,18);
     }

    \foreach \x/\y in {0/0,4/8, 8/4, 8/12, 12/12} {
        \Left{\x+3}{\y+5};
        \Left{\x+3}{\y+3};
        \Left{\x+5}{\y+5};
        \Left{\x+5}{\y+3};
        
        \Down{\x+4}{\y+4};
        \Down{\x+4}{\y+2};
        \Down{\x+6}{\y+4};
        \Down{\x+6}{\y+2};
    }

    \foreach \x/\y in {0/12,4/12, 0/8, 12/8, 4/0, 12/0 } {
        \Left{\x+3}{\y+5};
        \Left{\x+3}{\y+3};
        \Left{\x+5}{\y+5};
        \Left{\x+5}{\y+3};
        
        \Down{\x+4}{\y+2};
        \Down{\x+6}{\y+4};
        \Down{\x+6}{\y+2};
        \Block{\x+3}{\y+4};
    }

    \foreach \x/\y in {8/8, 0/4, 4/4, 12/4, 8/0} {
        \Left{\x+3}{\y+5};
        \Left{\x+3}{\y+3};
        \Left{\x+5}{\y+5};
        \Left{\x+5}{\y+3};
        
        \Down{\x+4}{\y+4};
        \Down{\x+4}{\y+2};
        \Down{\x+6}{\y+2};
        \Block{\x+5}{\y+4};
    }

    \end{tikzpicture}
    \label{fig:all}
  \caption{GaS instance for the Satisfiability instance 
  $(x_1\vee x_2)\wedge(x_1\vee\overline{x_3}\vee x_4)\wedge(\overline{x_1}\vee\overline{x_2}\vee\overline{x_4})\wedge(x_2\vee\overline{x_3}\vee x_4)$.}
\end{figure}
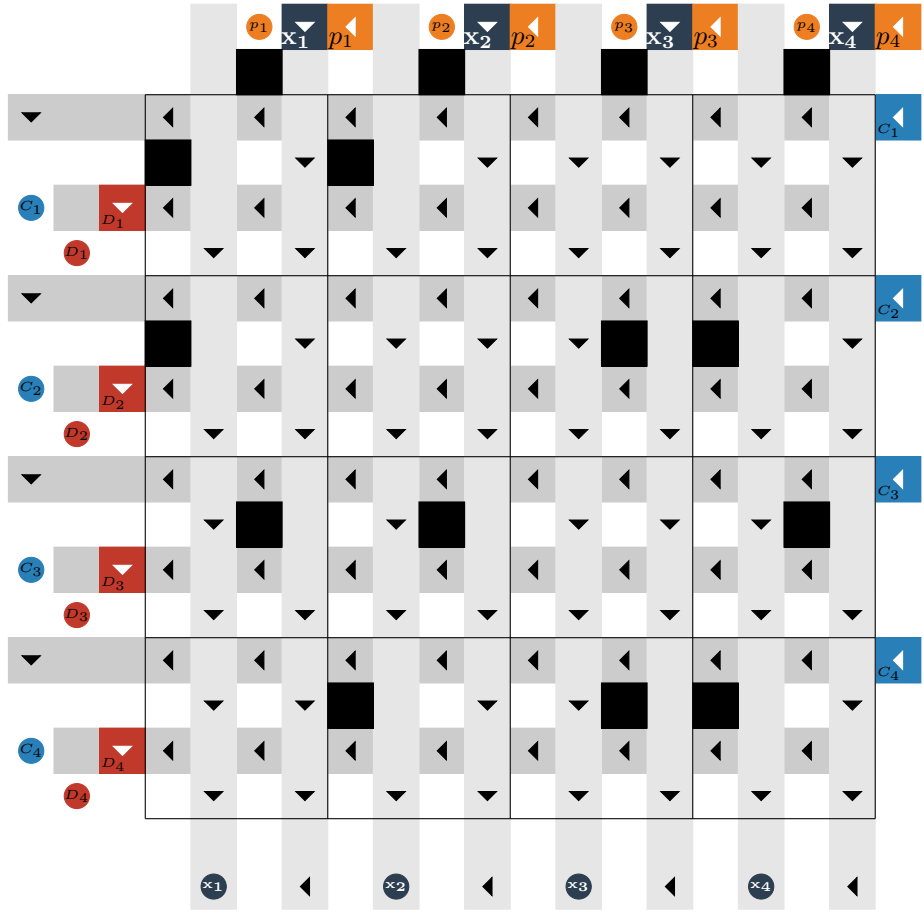

\begin{lemma}\label{lemma1}
  GaS is NP-hard.
\end{lemma}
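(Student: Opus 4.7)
The reduction itself has already been laid out; what remains is to verify that it runs in polynomial time (which is immediate, since the instance consists of an $O(nm)\times O(nm)$ lattice filled with $O(nm)$ crossings each of constant size) and to prove the biconditional: the SAT instance $(X,\mathcal{C})$ is satisfiable if and only if the constructed GaS instance admits a winning sequence.

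My plan is to first establish a handful of structural invariants that any winning sequence must obey, and then use them to read off the satisfying assignment (or, conversely, to schedule the pushes). The basic invariants are: (i) since every initial direction and every arrow points either left or down, no square is ever pushed above or to the right of its starting position, so the infeasibility notion from the text is closed under the dynamics; (ii) no blocker is ever pushed, because moving a blocker even one step places it at an infeasible position; (iii) inside each variable gadget, the blocker adjacent to the decision square forces the variable square $x_i$ to descend entirely in one of the two dedicated columns, so the assignment $\pi(x_i) := \texttt{true}$ if and only if $x_i$ uses the left column is well-defined from any winning play; and (iv) inside each clause gadget, the square $D_j$ can only reach its final position if the clause square $C_j$ has already vacated the lower of the two clause rows, i.e.\ if $C_j$ exits its gadget leftward along the bottom row.

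The heart of the correctness proof is a crossing-by-crossing case analysis based on Figures \ref{crossing} and \ref{fig:cross2}. For a crossing where $C_j$ contains neither $x_i$ nor $\overline{x_i}$, the arrows and (absence of) blockers force any clause square entering horizontally to leave horizontally on the same row, and any variable square entering vertically to leave vertically in the same column; in particular a row-change is impossible. For a crossing where $C_j$ contains the literal matching the chosen column, the geometry permits a row-change precisely when the variable square has descended through the matching column before the clause square passes, because that is the only case in which $C_j$ can be pushed from above by $x_i$ without having its direction rewritten by a stray arrow or being blocked. Once these local statements are in hand, the $(\Rightarrow)$ direction follows: any winning play yields an assignment $\pi$, and because each $D_j$ reached its target, the clause square $C_j$ performed a row-change at some crossing, which by the crossing analysis must correspond to a literal of $C_j$ that is satisfied by $\pi$. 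The $(\Leftarrow)$ direction is a construction: given a satisfying $\pi$, push all variable squares down their assigned columns (followed by the decision squares $p_i$), then for each clause $C_j$ pick a satisfying literal, push $C_j$ leftward until it reaches the corresponding crossing and drops to the lower row, continue left to its destination, and finally push $D_j$ down. Because different crossings involve disjoint sets of squares, these schedules can be concatenated without interference.

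The main obstacle will be the crossing analysis: the $4\times 4$ crossing tiles are small but interact with both a variable square and a clause square, and one must verify in every configuration — especially the two asymmetric cases in Figure~\ref{fig:cross2} — that every possible order of pushes is accounted for, that the blocker really is never advantageous to move, and that the "pushed square inherits the pusher's direction" rule combined with the arrow rewrite rule produces no side-channel by which the wrong literal could appear to satisfy a clause. Once this finite case-check is complete, the reduction is correct and, combined with the NP-hardness of \textsc{Satisfiability}, yields the lemma.
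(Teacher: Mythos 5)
Your overall strategy is the same as the paper's: check that the reduction is polynomial, read off an assignment from which of its two columns each variable square uses, argue via the $D_j$ squares that every clause square must change rows at a crossing corresponding to a satisfied literal, and conversely build a winning sequence from a satisfying assignment. The invariants (i)--(iv) and the crossing-by-crossing analysis are exactly the content the paper relies on (and, like the paper, you defer the finite case-check rather than carrying it out).

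However, your schedule for the $(\Leftarrow)$ direction contains a step that fails. A clause square keeps the direction \emph{left} for its entire journey along a clause row (there are no down-arrows on the clause rows inside any crossing), so the only way $C_j$ can move from its upper row to its lower row is to be pushed down twice by a variable square that is still \emph{above} it in the chosen column --- it cannot ``drop'' on its own. If, as you propose, you first push all variable squares down their assigned columns and only afterwards move the clause squares leftward, every variable square has already descended below the clause rows, $C_j$ can never change rows, and no $D_j$ reaches its target. The descents must be interleaved with the clause squares' leftward motion: for example, process the clauses from top to bottom, and for each $C_j$ first bring it to the crossing of a chosen satisfied literal, then lower the corresponding variable square just far enough to push $C_j$ down two rows, then continue $C_j$ leftward. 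Relatedly, your justification that ``different crossings involve disjoint sets of squares'' is false --- each variable square participates in up to $m$ crossings and each clause square in up to $n$ --- so non-interference needs the ordering argument above rather than disjointness. The same slip appears in your crossing analysis, where you assert a row-change is possible ``precisely when the variable square has descended through the matching column before the clause square passes'': it is the opposite; the variable square must not yet have passed the crossing when $C_j$ arrives there.
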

\begin{proof}
 For a given {\sc SATISFIABILITY} instance $S$ with $n$ variables and $m$ clauses
  we construct a GaS instance as shown above.
 Obviously, this is a polynomial transformation: the GaS instance is placed on a grid of total size $O(nm)$ and contains $O(n+m)$ squares and $O(nm)$  arrows.
 
 We have seen that each variable square has to use either its left or right column 
 in order to win the game. Once such a square has left the uppermost row it cannot be moved
 to the left until it has reached the lowermost row. Otherwise, the square would change its direction
 and cannot reach its final position.
 
 Assume that there exists a truth assignment $\pi$ satisfying $S$. Using the decision squares we push
  each variable square $x$ to the columns corresponding to $\pi(x)$.
 As the truth assignment is satisfied, for each clause $C$ there exists a literal $l\in C$ that is 
 set to $\mathtt{true}$ by $\pi$. Now we push the clause square $C$ to the left
 until we reach the column assigned to $l$. Moving the variable squares down they can be used to 
 push the clause square into their lower rows. Now they can be used to push the squares $D$ by one 
 position to the left so that they can reach their final destination. Finally, all remaining squares can 
 reach their final position without problems and the game is won.

 Now assume, that we have an instance of the game that can be won. We define a truth assignment
 $\pi$ by setting $\pi(x_i)=\mathtt{true}$ if the variable square $x_i$ uses its left column when
  moving down and $\pi(x_i)=\mathtt{false}$ otherwise.
  Now consider the sequence $Q$ of pushes that is used to win the game.
  As for all $i\in\{1,\ldots, m\}$ the square $D_i$ reaches its final position, it must be pushed 
  by the square $C_i$ by one to the left in one of the rounds. But then $C_i$ has been pushed
  to its lower row, which can only be done by a variable square $x_i$ such that the corresponding
  literal $x_i$ or $\overline{x_i}$ is in $C_i$ and is satisfied by $\pi$. Thus each clause is satisfied by $\pi$.
\end{proof}

\begin{note}
  First note that each instance of the game can be restricted to a grid of size
 $O(|S|(|S|+|A|)\times O(|S|(|S|+|A|)$. If there are more than $|S|$ succeeding rows or columns
 that neither contain squares, final positions nor arrows, we can delete all but $|S|$ of them.
 Thus the size of the grid is polynomially bounded in the size of the input.  
 For the problem to be in NP, we have to show that for every instance that can be won there exists
  a certificate verifiable in polynomial time. Such a certificate could be a winning sequence.
   Unfortunately, it is not known if there always exists a winning sequence of polynominal size.

 Nevertheless, the restricted version of the ``Game about Squares'' with instances,
 where only one horizontal and one vertical  direction are allowed, is in NP: Every square can be 
 pushed  at most $O(|S|(|S|+|A|))$ times and thus the game ends after a polynomial number of 
 rounds.
 By the proof of Lemma \ref{lemma1}, this restricted version is NP-hard.
\end{note}

\section*{Acknowledgment}

The author likes to thank Andrey Shevchuk for this addictive game and Jan
Schneider
for valuable discussions.

\nocite{*}
%
\bibliography{game_about_squares}{}
\bibliographystyle{plain}

\end{document}